\newcommand{\cA}{\mathcal{A}}
\newcommand{\cB}{\mathcal{B}}
\newcommand{\cD}{\mathcal{D}}
\newcommand{\cF}{\mathcal{F}}
\newcommand{\cH}{\mathcal{H}}
\newcommand{\cL}{\mathcal{L}}
\newcommand{\cP}{\mathcal{P}}
\newcommand{\Id}{\mathbb{I}}
\newcommand{\tr}{\text{tr}}
\newtheorem{theorem}{Theorem}
\newtheorem{lemma}[theorem]{Lemma}
\begin{document}

\title{A disturbance tradeoff principle for quantum measurements}
\author{Prabha Mandayam}
\affiliation{Department of Physics, Indian Institute of Technology  Madras, Chennai   - 600036, India. }
\author{M. D. Srinivas}
\affiliation{Centre for Policy Studies, Mylapore, Chennai - 600004, India.}
\date{\today}

\begin{abstract}
We demonstrate a fundamental principle of disturbance tradeoff for quantum measurements, along the lines of the celebrated uncertainty principle: The disturbances associated with measurements performed on {\it distinct} yet {\it identically prepared} ensembles of systems in a pure state cannot {\it all} be made arbitrarily small. Indeed, we show that the average of the disturbances associated with a set of projective measurements is strictly greater than zero whenever the associated observables do not have a common eigenvector. For such measurements, we show an equivalence between disturbance tradeoff measured in terms of fidelity and the entropic uncertainty tradeoff formulated in terms of the Tsallis entropy ($T_{2}$). We also investigate the disturbances associated with the class of non-projective measurements, where the difference between the disturbance tradeoff and the uncertainty tradeoff manifests quite clearly.


\end{abstract}

\maketitle

The uncertainty principle, which is one of the cornerstones of quantum theory, has had a long history. In its original formulation by Heisenberg for canonically conjugate variables~\cite{heisenberg:ur}, the uncertainty principle was stated as an effect of the {\it disturbance} caused due to a measurement of one observable on a succeeding measurement of another. However, the subsequent mathematical formulation due to Robertson~\cite{robertson:uncertainty} and Schr\"odinger~\cite{schrodinger} in terms of variances departed from this original interpretation. Rather, they obtained a non-trivial lower bound on the product of the variances associated with the measurement of a pair of incompatible observables, performed on {\it distinct} yet {\it identically prepared} copies of a given system. The Robertson-Schr\"odinger inequality thus expresses a fundamental limitation as regards the preparation of an ensemble of systems in identical states, and is therefore a manifestation of the so-called {\it preparation uncertainty}~\cite{BLW13_Heisenberg}.

Along the same lines, the more recent entropic formulations of the uncertainty principle~\cite{ww:urSurvey} also demonstrate the existence of a fundamental tradeoff for the uncertainties associated with {\it independent} measurements of incompatible observables on {\it identically prepared} ensemble of systems. Entropic uncertainty relations (EURs) have been obtained for specific classes of observables for both the Shannon and R\'enyi entropies~\cite{bm:uncertainty, deutsch:ur, kraus:ur, maassen:ur, ivanovic:ur, sanchez:old, sanchez:entropyD2, bia:uncertaintyRenyi, ww:cliffordUR}, as well as for the Tsallis entropies~\cite{tsallis:entropy, tsallis1, rastegin2013_tsallis}.

Here, we prove the existence of a similar principle of tradeoff for the {\it disturbances} associated with the measurements of a set of observables. It is a fundamental feature of quantum theory that when an observable is measured on an ensemble of systems, the density operator of the resulting ensemble is in general different from that prior to the measurement. The distance between these two density operators is indeed a measure of the {\it disturbance} due to measurement. Different measures of distance between density operators~\cite{NCbook} give rise to different measures of disturbance. We are thus lead to a class of disturbance measures which have been used recently in the context of quantifying incompatibility of a pair of observables~\cite{PM_MDS14}. In terms of this class of disturbance measures, we demonstrate the existence of a fundamental tradeoff principle for the disturbances associated with quantum measurements performed on {\it distinct} yet {\it identically prepared} copies of a pure state.

For the class of projective measurements considered in standard quantum theory, we show that the average of the disturbances associated with a set of such measurements is strictly greater than zero whenever the associated set of observables do not have a common eigenvector. In the particular case when the disturbance is characterized by the square of the fidelity function, we show a mathematical equivalence between the disturbance due to the measurement of an observable on a pure state and the uncertainty as quantified by the Tsallis entropy ($T_{2}$) of order $2$ of the probability distribution over the outcomes of such a measurement. Our work thus provides a new operational significance to the $T_{2}$ entropy in the context of quantum information theory. We make use of some of the known results on EURs to obtain disturbance tradeoff relations for specific classes of observables. We also prove an optimal disturbance tradeoff relation for a pair of qubit observables, which is based on a new, tight $T_{2}$ EUR. 

When we consider the general class of observables given by positive operator valued measures (POVMs), the associated measurements are characterized by completely positive (CP) instruments. For this general class of non-projective measurements we show that the disturbance and uncertainty tradeoffs are significantly different; they seem to capture different aspects of the mutual incompatibility of a set of measurements.

It may be noted that uncertainty relations have also been studied in the successive measurement scenario, both in the form of entropic relations~\cite{MDS01, MDS03_successiveEUR}, and in the form of {\it error-disturbance} relations~\cite{ozawa2004, BLW13_Heisenberg,BHOW_PRL14,ColesFurrer_13} that are in line with Heisenberg's original interpretation of the uncertainty principle. In contrast, here we look at the disturbances associated with {\it distinct} measurements of incompatible observables on identically prepared ensembles of systems. Our work thus brings to light a completely novel aspect of measurement-induced-disturbance: quantum theory places a fundamental constraint on these disturbances even when the corresponding measurements are made on distinct copies of a state.

The rest of the paper is organized as follows. We define the class of disturbance measures in Sec.~\ref{sec:dist_measures} and derive the tradeoff principle for projective measurements in Sec.~\ref{sec:disturbance_tradeoff}. We discuss the equivalence between the fidelity-based measure and the $T_{2}$ entropy in Sec.~\ref{sec:DF_EUR}, using which we obtain fidelity-based disturbance tradeoff relations for specific classes of observables in Sec.~\ref{sec:dist_example}. In Sec.~\ref{sec:qubit_disturbance} we prove an optimal tradeoff relation for a pair of qubit observables. Finally, we discuss the disturbance tradeoff principle for non-projective measurements in Sec.~\ref{sec:dist_POVM}.

\section{Disturbance measures for quantum measurements}\label{sec:dist_measures}

We begin with a brief review of the mathematical formalism of quantum measurements with a discrete set of outcomes. In standard quantum theory, they correspond to observables which are self-adjoint operators with purely discrete spectra. Any such observable $A$ has a spectral resolution $A = \sum_{i}a_{i}P^{A}_{i}$, where $\{P^{A}_{i}\}$ are orthogonal projectors. Measurement of such an observable $A$ transforms the state $\rho$, as per the von Neumann-L\"uders collapse postulate, to $\Phi^{A}(\rho) = \sum_{i}P^{A}_{i}\rho P^{A}_{i}$.

More generally, a quantum observable $\cA$ with discrete outcomes is described by a collection of positive operators $\{ 0 \leq A_{i} \leq \Id\}$ that satisfy $\sum_{i}A_{i} = \Id$. The probability of obtaining outcome $i$ when measuring observable $A$ in state $\rho$ is given by $\tr[\rho A_{i}]$. For such general observables given by positive operator valued (POV) measures $\cA\sim\{A_{i}\}$, there is no canonical specification of the post-measurement state; the associated measurement transformation can now be chosen as any CP {\it instrument} $\Phi^{\cA}$ {\it implementing} the POVM $\cA$~\cite{Heinosaari_book}.

A CP instrument $\Phi^{\cA}$ is a collection of completely positive linear maps $\Phi^{\cA}_{i}$ such that the probability of realizing outcome $i$ is given by $\tr[\Phi_{i}^{\cA}(\rho)] = \tr[\rho A_{i}]$, for all states $\rho$. The overall transformation of state $\rho$ by instrument $\Phi^{\cA}$ is described by a {\it quantum channel}, that is, a completely positive trace-preserving (CPTP) map (also denoted by $\Phi^{\cA}$):
\[\Phi^{\cA}(\rho) = \sum_{i}\Phi^{\cA}_{i}(\rho).\]
It is well known that any CPTP channel can be represented in the form $\Phi^{\cA}(\rho) = \sum_{i}K_{i}\rho K_{i}^{\dagger}$, where the {\it Kraus operators} $\{K_{i}\}$ satisfy $\sum_{i}K_{i}^{\dagger}K_{i} = \Id$.

The same observable can indeed be implemented by several different instruments. One simple implementation of a measurement of observable $\cA\sim\{A_{i}\}$ is given by the {\it L\"uders instrument} $\Phi_{\cL}^{\cA}$, in which the post-measurement state after a measurement of observable $\cA$ on state $\rho$ is given by
\[\Phi_{\cL}^{\cA}(\rho) = \sum_{i}A_{i}^{1/2}\rho A_{i}^{1/2}.\]



For a general quantum measurement on an ensemble of systems in state $\rho$, the post-measurement state $\Phi^{\cA}(\rho)$ of the ensemble is thus described via the action of a CPTP map $\Phi^{\cA}$, often called the {\it measurement channel}. The {\it distance} between the states $\rho$ and $\Phi^{\cA}(\rho)$ is a valid measure of the {\it disturbance} caused to state $\rho$ by a measurement of $A$. The disturbance due to the measurement $\cA$ on $\rho$ can therefore be estimated by any of the following measures~\cite{NCbook, PM_MDS14}: the trace-distance $D_{1}\left( \Phi^{\cA}(\rho) , \rho\right)$, the fidelity-based distance $D_{F}(\Phi^{\cA}(\rho), \rho)$, or the operator norm $\parallel \Phi^{\cA}(\rho) - \rho \parallel$. We formally define the corresponding disturbance measures here.
\begin{eqnarray}
\mathfrak{D}_{1}(\cA; \rho) &\equiv& \frac{1}{2}\tr \left\vert \Phi^{\cA}(\rho) - \rho \right\vert \nonumber \\
\mathfrak{D}_{F}(\cA;\rho) &\equiv& 1 - F^{2}[\Phi^{\cA}(\rho), \rho] \nonumber \\
\mathfrak{D}_{\infty}(\cA;\rho) &\equiv& \parallel \Phi^{\cA}(\rho) - \rho \parallel . \label{eq:disturbance}
\end{eqnarray}
Note that all three disturbance measures satisfy $0 \leq \mathfrak{D}_{\alpha}(\cA;\rho) \leq 1$, $\alpha \in \{1,F,\infty\}$, and that $\mathfrak{D}_{\alpha}(\cA;\rho) = 0$ iff $\Phi^{\cA}(\rho) = \rho$.


In the following Lemma, we state, for different classes of measurements, the necessary and sufficient conditions for a pure state to be left {\it undisturbed} by a given measurement.

\begin{lemma}[Zero-disturbance conditions for pure states] \label{lem:zero_dist}
\begin{itemize}
\item[(a)] For a projective measurement associated with a self-adjoint operator $A$ with a purely discrete spectrum, $\mathfrak{D}_{\alpha}(A;|\psi\rangle) = 0 \; (\alpha \in \{1,F,\infty\})$ if and only if $|\psi\rangle$ is an eigenstate of $A$.
\item[(b)] If $\cA\sim\{A_{i}\}$ is a POVM implemented by the L\"uders channel $\Phi_{\cL}^{\cA}$, $\mathfrak{D}_{\alpha}(\cA;|\psi\rangle) = 0 \; (\alpha \in \{1,F,\infty\})$ if and only if $|\psi\rangle$ is a common eigenstate of the operators $\{A_{i}\}$.
\item[(c)] If $\cA$ is a POVM implemented by a general CP instrument $\Phi^{\cA}(\rho)=\sum_{i}K_{i}\rho K_{i}^{\dagger}$, $\mathfrak{D}_{\alpha}(\cA;|\psi\rangle) = 0 \; (\alpha \in \{1,F,\infty\})$ if and only if the state $|\psi\rangle$ satisfies
\[ \sum_{i}|\langle\psi|K_{i}|\psi\rangle|^{2} = 1  .\]
\end{itemize}
\end{lemma}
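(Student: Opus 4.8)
The plan is to reduce all three parts to a single statement about the measurement channel. The excerpt already records that, for every $\alpha\in\{1,F,\infty\}$, the disturbance $\mathfrak{D}_{\alpha}(\cA;|\psi\rangle)$ vanishes precisely when $\Phi^{\cA}(|\psi\rangle\langle\psi|)=|\psi\rangle\langle\psi|$. Consequently the three distance measures never re-enter the analysis, and the entire lemma collapses to a question of \emph{when the pure state $|\psi\rangle\langle\psi|$ is a fixed point of the relevant CPTP map}. I would therefore prove part (c) first, treating it as the master statement, and then recover (a) and (b) by specializing the Kraus operators.

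For (c), set $|u_{i}\rangle = K_{i}|\psi\rangle$, so that $\Phi^{\cA}(|\psi\rangle\langle\psi|)=\sum_{i}|u_{i}\rangle\langle u_{i}|$, while the completeness relation $\sum_{i}K_{i}^{\dagger}K_{i}=\Id$ yields the normalization $\sum_{i}\| u_{i}\|^{2}=\langle\psi|\psi\rangle=1$. The ``only if'' direction is immediate: sandwiching the fixed-point equation between $\langle\psi|$ and $|\psi\rangle$ gives $\sum_{i}|\langle\psi|u_{i}\rangle|^{2}=1$, which is the stated condition. All the content lies in the converse. Here the key estimate is Cauchy--Schwarz, $|\langle\psi|u_{i}\rangle|^{2}\leq\| u_{i}\|^{2}$ for each $i$; summing and invoking the normalization gives $\sum_{i}|\langle\psi|u_{i}\rangle|^{2}\leq\sum_{i}\| u_{i}\|^{2}=1$. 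The hypothesis forces equality in every term, and the equality case of Cauchy--Schwarz then says each $|u_{i}\rangle$ is a scalar multiple of $|\psi\rangle$, namely $K_{i}|\psi\rangle=c_{i}|\psi\rangle$ with $c_{i}=\langle\psi|K_{i}|\psi\rangle$. Substituting back collapses the output to $\bigl(\sum_{i}|c_{i}|^{2}\bigr)|\psi\rangle\langle\psi|=|\psi\rangle\langle\psi|$, establishing the fixed-point property.

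The point worth stressing --- and the only place where the argument could look suspicious --- is that a single scalar constraint appears to control a full operator identity. The resolution is that the trace-preservation constraint $\sum_{i}\| u_{i}\|^{2}=1$ \emph{exactly matches} the Cauchy--Schwarz upper bound, so saturating the scalar condition leaves no slack and rigidly forces collinearity of $K_{i}|\psi\rangle$ with $|\psi\rangle$ in every Kraus sector. This matching of the normalization to the inequality is the crux of the whole lemma, and I would present it explicitly.

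Parts (a) and (b) then follow by inserting specific Kraus operators and translating the scalar condition. For (a) take $K_{i}=P^{A}_{i}$, so the condition reads $\sum_{i}p_{i}^{2}=1$ with $p_{i}=\langle\psi|P^{A}_{i}|\psi\rangle$ a probability distribution; since $\sum_{i}p_{i}=1$ with $0\leq p_{i}\leq 1$, this forces a single $p_{j}=1$, hence $\| (\Id-P^{A}_{j})|\psi\rangle\|^{2}=1-p_{j}=0$ and $P^{A}_{j}|\psi\rangle=|\psi\rangle$, which is exactly the eigenstate condition. For (b) take the self-adjoint operators $K_{i}=A_{i}^{1/2}$; the collinearity $A_{i}^{1/2}|\psi\rangle=c_{i}|\psi\rangle$ obtained above gives $A_{i}|\psi\rangle=A_{i}^{1/2}(c_{i}|\psi\rangle)=c_{i}^{2}|\psi\rangle$, so $|\psi\rangle$ is a common eigenvector of all the $A_{i}$, and the reverse implication is a one-line check using $\sum_{i}A_{i}=\Id$. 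The main obstacle is thus conceptual rather than computational: identifying (c) as the correct general statement and recognizing that saturation of the trace constraint is precisely what upgrades the scalar identity to the operator-level fixed-point equation.
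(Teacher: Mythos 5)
Your proposal is correct, and it takes a genuinely different --- and in one respect stronger --- route than the paper. The paper proves the three parts separately: for (a) it reduces the fixed-point equation $\Phi^{A}(|\psi\rangle\langle\psi|)=|\psi\rangle\langle\psi|$ to $\sum_{i}\left(\langle\psi|P^{A}_{i}|\psi\rangle\right)^{2}=1$ and concludes directly; for (b) it does not argue from scratch but invokes the cited fixed-point theorems of Busch--Singh and Arias--Gheondea--Gudder, which say that $\sum_{i}A_{i}^{1/2}\rho A_{i}^{1/2}=\rho$ iff $[A_{i},\rho]=0$ for all $i$; and for (c) it merely \emph{asserts} the equivalence of $\sum_{i}K_{i}|\psi\rangle\langle\psi|K_{i}^{\dagger}=|\psi\rangle\langle\psi|$ with $\sum_{i}|\langle\psi|K_{i}|\psi\rangle|^{2}=1$, giving no argument for the nontrivial ``if'' direction. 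You invert this logic: you prove (c) first, with the Cauchy--Schwarz bound $|\langle\psi|K_{i}|\psi\rangle|^{2}\leq\|K_{i}|\psi\rangle\|^{2}$ saturated term-by-term against the trace-preservation identity $\sum_{i}\|K_{i}|\psi\rangle\|^{2}=1$, which rigidly forces $K_{i}|\psi\rangle=c_{i}|\psi\rangle$ in every sector; you then recover (a) by taking $K_{i}=P^{A}_{i}$ (where $\sum_{i}p_{i}^{2}=1$ with $\sum_{i}p_{i}=1$, $0\leq p_{i}\leq 1$ forces a single $p_{j}=1$) and (b) by taking $K_{i}=A_{i}^{1/2}$ (where collinearity gives $A_{i}|\psi\rangle=c_{i}^{2}|\psi\rangle$). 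Your specializations are sound, including the implicit handling of the degenerate case $K_{i}|\psi\rangle=0$ and the converse direction of (b) via $\sum_{i}A_{i}=\Id$. What your route buys: the lemma becomes fully self-contained --- no external citation is needed for (b), and you supply precisely the converse that the paper's proof of (c) leaves unjustified. What the paper's route buys: the cited commutation characterization $[A_{i},\rho]=0$ holds for arbitrary mixed states $\rho$, whereas your collinearity argument is intrinsically pure-state; since the lemma concerns only pure states, nothing is lost here, but the paper's citation gives the reader a pointer to the more general fixed-point theory.
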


\begin{proof}
Let $A$ be a projective measurement. By definition, $D_{\alpha}(\rho, \sigma) = 0$ if and only if $\rho = \sigma$. Therefore, $\mathfrak{D}_{\alpha}(A;|\psi\rangle) = 0$ if and only if $\Phi^{A}(|\psi\rangle\langle\psi|) = |\psi\rangle\langle\psi|$. This is equivalent to $\sum_{i}\langle (\psi|P^{A}_{i}|\psi\rangle)^{2} = 1$, which proves the part (a) of the Lemma.

For a POVM $\cA \sim \{A_{i}\}$ which is implemented by a L\"uders instrument $\Phi_{\cL}^{\cA}$, the disturbance vanishes if and only if, \[\sum_{i}A_{i}^{1/2}|\psi\rangle\langle\psi|A_{i}^{1/2} = |\psi\rangle\langle\psi|.\]
This is equivalent to the condition that $[A_{i},|\psi\rangle\langle\psi|] = 0$ for all $A_{i}$~\cite{busch_singh98, AGG02}. 
Therefore, $\mathfrak{D}_{\alpha}(\cA;|\psi\rangle) = 0$ iff the state $|\psi\rangle$ is a common eigenstate of all the $A_{i}$'s.

Finally, for a POVM $\cA$ implemented by a general CP instrument $\Phi^{\cA}$, the disturbance $D_{\alpha}(\cA;|\psi\rangle) = 0$ iff
\begin{eqnarray}
\sum_{i}K_{i}|\psi\rangle\langle\psi|K_{i}^{\dagger} &=& |\psi\rangle\langle\psi| \nonumber \\
\Leftrightarrow \sum_{i}|\langle\psi|K_{i}|\psi\rangle|^{2} &=& 1.
\end{eqnarray}
\end{proof}


\section{Disturbance Tradeoff Principle for projective measurements} \label{sec:disturbance_tradeoff}
In this and the following section we restrict our attention to projective measurements that are associated with self-adjoint operators with purely discrete spectra. We first note the following property of the average disturbance of a pair of observables.
\begin{lemma}\label{lem:avg_dist}
For a pair of observables $A$ and $B$ with purely discrete spectra, define the quantity
\begin{equation}
d_{\alpha}(A,B) \equiv \inf_{|\psi\rangle} \frac{1}{2}\left[\mathfrak{D}_{\alpha}(A;|\psi\rangle) + \mathfrak{D}_{\alpha}(B;|\psi\rangle) \right]. \label{eq:inf_disturbance}
\end{equation}
Then, $0\leq d_{\alpha}(A,B) \leq 1 \; (\alpha \in \{1,F,\infty\})$, with $d_{\alpha}(A,B) = 0$ if and only if $A$ and $B$ have a common eigenvector.
\end{lemma}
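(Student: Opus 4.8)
The plan is to dispatch the bounds and the two directions of the equivalence in turn, using Lemma~\ref{lem:zero_dist}(a) to translate statements about vanishing disturbance into statements about eigenvectors, and a compactness argument to control the infimum in \eqref{eq:inf_disturbance}. The bounds $0\le d_\alpha(A,B)\le 1$ are immediate: each $\mathfrak{D}_\alpha(\,\cdot\,;|\psi\rangle)$ lies in $[0,1]$, so their average does too for every $|\psi\rangle$, and passing to the infimum over pure states preserves the interval. For the ``if'' direction, suppose $A$ and $B$ share a common eigenvector $|\psi_0\rangle$. Then $|\psi_0\rangle$ is separately an eigenstate of $A$ and of $B$, so Lemma~\ref{lem:zero_dist}(a) gives $\mathfrak{D}_\alpha(A;|\psi_0\rangle)=\mathfrak{D}_\alpha(B;|\psi_0\rangle)=0$; the averaged quantity vanishes at $|\psi_0\rangle$, so $d_\alpha(A,B)$, being nonnegative yet bounded above by this zero value, equals $0$.

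The ``only if'' direction is the substantive one, and here the crucial observation is that the infimum in \eqref{eq:inf_disturbance} is in fact attained. The assignment $|\psi\rangle\mapsto|\psi\rangle\langle\psi|$ is continuous, the measurement channels $\Phi^{A}$ and $\Phi^{B}$ are fixed linear maps, and each of the distances $D_1,D_F,D_\infty$ is continuous in its arguments; consequently the function $|\psi\rangle\mapsto\tfrac12[\mathfrak{D}_\alpha(A;|\psi\rangle)+\mathfrak{D}_\alpha(B;|\psi\rangle)]$ is continuous on the space of pure states, which is compact. A continuous function on a compact set attains its infimum, so some $|\psi_0\rangle$ realizes the value $d_\alpha(A,B)=0$. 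Since both summands are nonnegative, their average can vanish only if each vanishes, i.e. $\mathfrak{D}_\alpha(A;|\psi_0\rangle)=\mathfrak{D}_\alpha(B;|\psi_0\rangle)=0$, whereupon Lemma~\ref{lem:zero_dist}(a) forces $|\psi_0\rangle$ to be an eigenstate of both $A$ and $B$ --- a common eigenvector, as required.

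I expect the attainment of the infimum to be the main obstacle, precisely because \eqref{eq:inf_disturbance} is written as an infimum rather than a minimum, so one cannot simply ``read off'' a state at which the average disturbance is zero. In the finite-dimensional setting this is resolved cleanly by compactness of the space of pure states together with the continuity noted above; equivalently, one may take a minimizing sequence $|\psi_n\rangle$ and extract a convergent subsequence whose limit inherits zero average disturbance by continuity. The only place where genuine care would be needed is if one allows operators with infinitely many discrete eigenvalues on an infinite-dimensional space, where the unit sphere is no longer norm-compact and a minimizing sequence need not converge; the observables relevant in the sequel are finite-dimensional, however, so the compactness argument applies directly.
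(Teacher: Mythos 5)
Your proof is correct, and its skeleton is the paper's: the bounds follow from $0\le\mathfrak{D}_\alpha\le 1$, and both directions of the equivalence are reduced to the zero-disturbance characterization of Lemma~\ref{lem:zero_dist}(a). Where you genuinely add something is in the ``only if'' direction. The paper's proof simply asserts that since the individual disturbances vanish only at eigenstates, ``the average vanishes if and only if there exists a common eigenvector'' --- silently identifying the vanishing of the \emph{infimum} $d_\alpha(A,B)$ with the vanishing of the average disturbance \emph{at some state}. Your compactness argument (continuity of $|\psi\rangle\mapsto\tfrac12[\mathfrak{D}_\alpha(A;|\psi\rangle)+\mathfrak{D}_\alpha(B;|\psi\rangle)]$ on the compact set of pure states, hence attainment of the infimum) is precisely the step needed to close this gap, and the paper omits it. Your closing caveat is also well taken --- indeed it is sharper than you state: in infinite dimensions, where the hypothesis ``purely discrete spectrum'' has nontrivial content, the biconditional itself can fail, not merely your proof of it. Take an orthonormal basis $\{e_n\}$ and a second basis $\{f_n\}$ obtained by rotating each pair $(e_{2n-1},e_{2n})$ through an angle $\theta_n\in(0,\pi/2)$ with $\theta_n\to 0$, and set $A=\sum_n n\,|e_n\rangle\langle e_n|$, $B=\sum_n n\,|f_n\rangle\langle f_n|$. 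These have purely discrete spectra and no common eigenvector, yet $\mathfrak{D}_\alpha(A;e_{2n-1})=0$ while, e.g.,
\begin{equation}
\mathfrak{D}_F(B;e_{2n-1}) = 2\sin^2\theta_n\cos^2\theta_n \longrightarrow 0, \nonumber
\end{equation}
so $d_\alpha(A,B)=0$ without the infimum being attained. Thus finite dimensionality (or attainment of the infimum, which is what your compactness argument secures) is an essential hypothesis for the ``only if'' direction, and your version of the proof has the merit of making visible exactly where it enters, whereas the paper's three-line proof conceals it.
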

\begin{proof}
The first part simply follows from the fact that $0 \leq \mathfrak{D}_{\alpha}(X;|\psi\rangle) \leq 1$ for $X=A,B$, for $\alpha \in \{1,F,\infty\}$. Next, we know from Lemma~\ref{lem:zero_dist} that the individual disturbances vanish only for eigenstates of the corresponding observables. Therefore the average vanishes if and only if there exists a common eigenvector of $A$ and $B$.
\end{proof}

\noindent This implies the following disturbance tradeoff principle: \\
{\centering \it For any two observables $A$ and $B$ with purely discrete spectra which do not have any common eignevector, there exists a quantity $d_{\alpha}(A,B) > 0$, such that for any pure state $|\psi\rangle$, the average of the disturbances due to measurements of $A$ and $B$ (performed independently, on identically prepared copies of $|\psi\rangle$) is greater than or equal to $d_{\alpha}(A,B)$.}

In other words, the disturbances due to measurements of incompatible observables $A$ and $B$ on any pure state cannot {\it both} be made arbitrarily small; if one is {\it small}, the other must necessarily be {\it of the order of} $d_{\alpha}(A,B)$, even though the measurements are performed independently, on identically prepared copies of any given pure state. Mathematically, this may be stated as,
\begin{equation}
 \frac{1}{2}\left[\mathfrak{D}_{\alpha}(A;|\psi\rangle) + \mathfrak{D}_{\alpha}(B;|\psi\rangle)\right] \geq d_{\alpha}(A,B) > 0, \label{eq:dist_tradeoff}
\end{equation}
for all pure states $|\psi\rangle$ and observables $A$ and $B$ which do not have any common eigenvector. 

More generally, a disturbance relation for a set of observables $\{A_{1}, A_{2}, \ldots, A_{N}\}$ is a state-independent lower bound of the form
\[ \frac{1}{N} \sum_{i=1}^{N}\mathfrak{D}_{\alpha}(A_{i};|\psi\rangle) \geq d_{\alpha}(A_{1},\ldots,A_{N}), \; \forall |\psi\rangle. \]
A simple extension of Lemma~\ref{lem:avg_dist} proves that $d_{\alpha}(A_{1}, \ldots, A_{N}) > 0$ for a set of observables that do not have any common eigenvector.

It should be noted that the above disturbance tradeoff principle holds {\it only} for pure state ensembles. If we take into consideration mixed states as well, then we have for instance the maximally mixed state $\frac{\Id}{d}$ in $d$-dimensions, which is {\it not disturbed} by the measurement of any observable! This would be so whatever be the measure of disturbance that one employs. 

Finally, we note that though we have formulated the tradeoff principle using a specific class of distance measures $D_{\alpha}$, such a tradeoff principle holds for {\it any} disturbance measure which is based on a distance $D(\rho, \sigma)$ satisfying $D(\rho,\sigma) = 0$ iff $\rho = \sigma$.


The disturbance tradeoff principle demonstrated above for projective measurements seems to bear a close resemblance to the well known uncertainty tradeoff principle, especially since the lower bounds in both cases vanish iff the set of observables under consideration have a common eigenvector. Here, it should first be noted that the disturbance associated with a measurement as defined in Eq.~\eqref{eq:disturbance} does not involve the probabilities for obtaining different outcomes in the measurement. On the other hand, the various entropies which are used to quantify uncertainty measure the spread in the probability distribution over the outcomes. Thus, there is no obvious relation between the disturbance caused by a measurement and the uncertainty or the ``spread'' in the probability distribution over its outcomes.

In the case of projective measurements, it so happens that the eigenstates of the osbervable are the states which are left undisturbed by the measurement, and, they are also the states in which the spread of the probability distribution is zero. In other words, for projective measurements, the pure states with zero disturbance are also the states of zero uncertainty. We shall later see that there is no such relation between the disturbance and uncertainty for general, non-projective measurements.

For the case of projective measurements, we now show that there is in fact a mathematical equivalence between a specific disturbance measure $\mathfrak{D}_{F}(A;|\psi\rangle)$ and the Tsallis entropy $T_{2}(A;|\psi\rangle)$ which is one of the standard measures of uncertainty.

\subsection{Fidelity-based Disturbance and $T_{2}$ Entropic Uncertainty} \label{sec:DF_EUR}

 It is easy to see that there is a mathematical relation between the disturbance caused by a measurement of observable $A = \sum_{i}a_{i}P^{A}_{i}$ on state $|\psi\rangle$, as quantified by the fidelity-based measure $\mathfrak{D}_{F}$, and the spread of the probability distribution over the outcomes of a measurement of $A$ on $|\psi\rangle$. From Eq.~\eqref{eq:disturbance}, we have,
\begin{eqnarray}
\mathfrak{D}_{F}(A;|\psi\rangle)  &=&  1- F^{2}[\Phi^{A}(|\psi\rangle\langle\psi|), |\psi\rangle\langle\psi|] \nonumber \\
&=& 1 - \sum_{i}\left(p_{|\psi\rangle}^{A}(i)\right)^{2}, \label{eq:disturbance_exp}
\end{eqnarray}
where $p_{|\psi\rangle}^{A}(i) \equiv \langle\psi|P^{A}_{i}|\psi\rangle$ is the probability of obtaining outcome $i$.

Recall that the Tsallis entropy~\cite{tsallis:entropy} $T_{\beta}(\{p(i)\})$ of a discrete probability distribution $\{p(i)\}$, for any real $\beta > 0$ ($\beta \neq 1$), is defined as:
\begin{equation}
T_{\beta} (\{p(i)\})  = \frac{1}{1-\beta}\left(\sum_{i} p(i)^{\beta} - 1\right) . \label{eq:tsallis}
\end{equation}
In the limit $\beta\rightarrow 1$ the Tsallis entropy $T_{1}$ is the same as the Shannon entropy: $T_{1}(\{p(i)\}) = - \sum_{i} p(i)\log p(i)$. For $d$-dimensional distributions the Tsallis entropies satisfy,
\[0 \leq T_{\beta}(\{p(i)\}) \leq \frac{d^{1-\beta} - 1}{1 - \beta}.  \]
In particular,
\[ 0 \leq T_{2}(\{p_{i}\}) \leq 1 - \frac{1}{d}, \]
where the lower bound is attained for $p_{i} = \delta_{ik}$ for some $k$, and, the upper bound,  when $p_{i} = \frac{1}{d}, \; \forall i$.

The $T_{\beta}$ entropy defined in Eq.~\eqref{eq:tsallis} is a non-extensive entropy originally developed by Tsallis in the context of statistical physics~\cite{tsallis:entropy}. The Tsallis entropies are concave for the entire range of $\beta \in (0,\infty)$~\cite{bengtsson_2006}. Similar to the Shannon and R\'enyi entropies, the Tsallis entropies $T_{\beta}(A;|\psi\rangle)$ of the probability distribution $p_{\psi}^{A}(i)$ over the outcomes of a measurement of an observable $A$ have been widely used to quantify the {\it uncertainty} associated with the outcome of a measurement of $A$ in state $|\psi\rangle$. For instance, the uncertainty principle for canonically conjugate observables has been reformulated in terms of the Tsallis entropies~\cite{tsallis1}. More recently, Tsallis entropic uncertainty relations have been obtained for specific classes of observables in finite dimensions~\cite{majernik_2001, NKG12_stabilizerEUR, rastegin2013_tsallis, rastegin2012_T1/2}.

It is easy to see from Eq.~\eqref{eq:disturbance_exp}, that the disturbance measure $\mathfrak{D}_{F}(A;|\psi\rangle)$ is indeed the same as the $T_{2}$-entropy of the probability distribution of the outcomes of a measurement of observable $A$ on state $|\psi\rangle$:
\begin{equation}
\mathfrak{D}_{F}(A;|\psi\rangle) = 1 - \sum_{i}(p_{|\psi\rangle}^{A}(i))^{2} = T_{2}(A;|\psi\rangle). \label{eq:DF_T2equiv}
\end{equation}
Both the disturbance $\mathfrak{D}_{F}(A;|\psi\rangle)$ and the uncertainty $T_{2}(A;|\psi\rangle)$ vanish when $|\psi\rangle$ is an eigenstate of $A$. For a non-degenerate observable $A$ in $d$-dimensions, both the disturbance and the uncertainty attain the maximum value $\left(1 - \frac{1}{d}\right)$ when $|\psi\rangle$ is mutually unbiased with respect to the eigenstates of $A$.

It may however be noted that this interesting equivalence between the fidelity-based measure of disturbance and the uncertainty measure given by the $T_{2}$ Tsallis entropy holds {\it only} for pure states. For mixed states, it follows from Eqs.~\eqref{eq:disturbance} and~\eqref{eq:disturbance_exp} that the disturbance is in general less than the corresponding $T_{2}$ entropy.

\subsection{Disturbance Tradeoffs for specific classes of observables}\label{sec:dist_example}

Using the equivalence in Eq.~\eqref{eq:DF_T2equiv}, we can directly obtain disturbance tradeoff inequalities for those classes of observables for which a $T_{2}$ EUR can be obtained. In particular, here we show disturbance tradeoff relations for (a) a set of $N$ mutually unbiased bases (MUBs), and, (b) a set of dichotomic anti-commuting observables.

(a) Let $\cB_{m} \equiv \{|i_{m}\rangle, i = 1,\ldots,d\}$ $(m = 1, \ldots, N)$ denote a set of $N$ MUBs in $d$-dimensions. Recall that two bases $\cB_{m}, \cB_{n}$ are said to be mutually unbiased if their respective basis vectors satisfy,
\[ |\langle i_{m} | j_{n}\rangle|^{2}  = \frac{1}{d}, \; \forall i,j .\]
Let $p_{|\psi\rangle}^{\cB_{m}}(i) = |\langle i_{m}|\psi\rangle|^{2}$ denote the probability of obtaining the $i$th outcome while measuring $\cB_{m}$ on a pure state $|\psi\rangle$. Then, Wu {\it et al.} show that~\cite{wu:MUB_bound},
\begin{equation}
\sum_{i=1}^{d}\sum_{m=1}^{N} \left(p_{|\psi\rangle}^{\cB_{m}}(i)\right)^{2} \leq 1 + \frac{N-1}{d} . \label{eq:prob_Nd}
\end{equation}
This immediately implies the following disturbance tradeoff relation for a set of $N$ MUBs in $d$-dimensions:
\begin{equation}
 \frac{1}{N}\sum_{m=1}^{N}\mathfrak{D}_{F}(\cB_{m};|\psi\rangle) \geq \left(1 - \frac{1}{N}\right)\left(1 - \frac{1}{d}\right) . \label{eq:dist_Nd}
\end{equation}
In dimensions where a complete set of $(d+1)$ MUBs exist, it is known that Eq.~\eqref{eq:prob_Nd} is in fact an exact equality for the full set of $(d+1)$ MUBs~\cite{larsen:entropy, ivanovic:ur}. Correspondingly, the disturbance tradeoff relation in Eq.~\eqref{eq:dist_Nd} becomes an exact equality for a complete set of $(d+1)$ MUBs.

(b) Let $\{A_{1}, A_{2}, \ldots, A_{N}\}$ be a set of pairwise anticommuting operators with eigenvalues $\{\pm 1\}$:
\begin{equation}
 \{A_{j}, A_{k}\} = 0\; \forall j\neq k, \; (A_{i})^{2} = \Id \; \forall i=1,\ldots,N . \label{eq:anticommuting}
\end{equation}
Let $P_{i}^{+}$ and $P_{i}^{-}$ denote the projectors onto the positive and negative eigenspaces respectively, of observable $A_{i}$. Correspondingly, $p_{|\psi\rangle}^{A_{i}}(\pm) = \langle \psi|P^{\pm}_{i}|\psi\rangle$ are the probabilities of obtaining values $\pm 1$ while measuring $A_{i}$ in state $|\psi\rangle$. 
We obtain a disturbance tradeoff relation for such a set of dichotomic, anticommuting observables. We merely state the relation here and refer to the appendix (Sec.~\ref{sec:proof}) for the proof.
\begin{theorem}\label{thm:anticommuting}
For a set of $N$ anticommuting observables $\{A_{1}, A_{2}, \ldots, A_{N}\}$ defined in Eq.~\eqref{eq:anticommuting},
\begin{equation}
 \frac{1}{N}\sum_{i=1}^{N}\mathfrak{D}_{F}(A_{i};|\psi\rangle) \geq \frac{1}{2}\left(1 - \frac{1}{N}\right). \label{eq:anticommuting_dist}
\end{equation}
\end{theorem}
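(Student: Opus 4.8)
The plan is to convert the claim, via the equivalence in Eq.~\eqref{eq:DF_T2equiv}, into a single inequality about expectation values, and then to dispatch that inequality using the anticommutation algebra. First I would use that each $A_{i}$ is dichotomic: its spectral projectors are $P_{i}^{\pm}$ and the outcome probabilities $p_{|\psi\rangle}^{A_{i}}(\pm)$ sum to one. Setting $\langle A_{i}\rangle \equiv \langle\psi|A_{i}|\psi\rangle = p_{|\psi\rangle}^{A_{i}}(+) - p_{|\psi\rangle}^{A_{i}}(-)$ and substituting $p_{|\psi\rangle}^{A_{i}}(\pm) = (1 \pm \langle A_{i}\rangle)/2$ into Eq.~\eqref{eq:DF_T2equiv} gives the compact form
\[
\mathfrak{D}_{F}(A_{i};|\psi\rangle) = 1 - \big(p_{|\psi\rangle}^{A_{i}}(+)\big)^{2} - \big(p_{|\psi\rangle}^{A_{i}}(-)\big)^{2} = \frac{1 - \langle A_{i}\rangle^{2}}{2}.
\]

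Averaging this over $i$ turns the target into
\[
\frac{1}{N}\sum_{i=1}^{N}\mathfrak{D}_{F}(A_{i};|\psi\rangle) = \frac{1}{2}\Big(1 - \frac{1}{N}\sum_{i=1}^{N}\langle A_{i}\rangle^{2}\Big),
\]
so that Eq.~\eqref{eq:anticommuting_dist} holds for every pure state if and only if $\sum_{i=1}^{N}\langle A_{i}\rangle^{2} \leq 1$. This reduction is the routine part; the entire content of the theorem is now concentrated in this one bound on the sum of squared expectation values, and the factor $1/N$ in the stated right-hand side comes directly from the $\tfrac12(1 - \tfrac1N)$ that $\tfrac12(1 - S/N)$ produces once $S \leq 1$.

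For the key bound I would introduce the Hermitian operator $M \equiv \sum_{i=1}^{N}\langle A_{i}\rangle A_{i}$ and compute $M^{2}$. The diagonal terms collapse to $\sum_{i}\langle A_{i}\rangle^{2} A_{i}^{2} = \big(\sum_{i}\langle A_{i}\rangle^{2}\big)\Id$ by $A_{i}^{2} = \Id$, while each off-diagonal pair $\langle A_{i}\rangle\langle A_{j}\rangle(A_{i}A_{j} + A_{j}A_{i})$ vanishes by $\{A_{i},A_{j}\} = 0$ for $i\neq j$. Writing $S \equiv \sum_{i}\langle A_{i}\rangle^{2}$, we thus get $M^{2} = S\,\Id$, so $M$ has eigenvalues $\pm\sqrt{S}$ and operator norm $\|M\| = \sqrt{S}$. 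On the other hand $\langle\psi|M|\psi\rangle = \sum_{i}\langle A_{i}\rangle^{2} = S$, and since $|\langle\psi|M|\psi\rangle| \leq \|M\|$ for a unit vector, we obtain $S \leq \sqrt{S}$, i.e. $S \leq 1$, which is exactly what is needed.

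The only genuinely delicate step is recognizing the right auxiliary operator: the quantity $\sum_{i}\langle A_{i}\rangle^{2}$ must be manufactured simultaneously as $M^{2}$ (through the cancellation of cross terms) and as $\langle\psi|M|\psi\rangle$, after which a one-line norm estimate closes everything. I expect the choice of the coefficients of $M$ to be precisely the expectation values $\langle A_{i}\rangle$ to be the crux; the rest is bookkeeping. As a consistency check, equality in $S \leq 1$ is achieved by taking $|\psi\rangle$ to be an eigenstate of any single $A_{k}$: then $\langle A_{k}\rangle = \pm1$ while $\langle A_{j}\rangle = 0$ for $j\neq k$ (since $A_{j}$ anticommutes with $A_{k}$ forces $\langle A_{j}\rangle = -\langle A_{j}\rangle$), giving $S = 1$ and an average disturbance of exactly $\tfrac12(1 - \tfrac1N)$. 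This confirms that the bound is tight.
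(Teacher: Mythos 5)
Your proof is correct, and its reduction step coincides with the paper's: both convert $\mathfrak{D}_{F}(A_{i};|\psi\rangle)$ via $p_{|\psi\rangle}^{A_{i}}(\pm) = (1\pm\langle A_{i}\rangle)/2$ into $\tfrac{1}{2}\left(1-\langle A_{i}\rangle^{2}\right)$, so that the theorem is equivalent to the single bound $\sum_{i=1}^{N}\langle\psi|A_{i}|\psi\rangle^{2}\leq 1$. The difference lies in how that bound is obtained: the paper simply cites it as a known relation (Eq.~\eqref{eq:metaUR}, attributed to~\cite{ww:cliffordUR, NKG12_stabilizerEUR}), whereas you prove it from first principles with the auxiliary operator $M=\sum_{i}\langle A_{i}\rangle A_{i}$, the collapse $M^{2}=S\,\Id$ (with $S=\sum_{i}\langle A_{i}\rangle^{2}$) forced by $A_{i}^{2}=\Id$ and the vanishing of the anticommutator cross terms, and the norm estimate $S=\langle\psi|M|\psi\rangle\leq\|M\|=\sqrt{S}$, hence $S\leq 1$. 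This makes your argument self-contained where the paper's is not; your $M$-trick is essentially the standard proof of the cited meta-uncertainty relation, so nothing is lost in generality, and the computation is sound (including the trivial case $S=0$). Your tightness check is also a genuine addition beyond what the paper proves here: an eigenstate of a single $A_{k}$ forces $\langle A_{j}\rangle=-\langle A_{j}\rangle=0$ for $j\neq k$ by anticommutation, so $S=1$ and the average disturbance equals $\tfrac{1}{2}\left(1-\tfrac{1}{N}\right)$, whereas the paper only records the special $d=2$ equality for the Pauli observables in Eq.~\eqref{pauli_dist}.
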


As a particular instance of Eq.~\eqref{eq:dist_Nd} and Eq.~\eqref{eq:anticommuting_dist}, we have the elegant tradeoff relation for the disturbances associated with measurements of the spin components $\sigma_{X}$, $\sigma_{Y}$ and $\sigma_{Z}$ in $d=2$:
\begin{equation}
\mathfrak{D}_{F}(\sigma_{X};|\psi\rangle) + \mathfrak{D}_{F}(\sigma_{Y}|\psi\rangle) + \mathfrak{D}_{F}(\sigma_{Z}; |\psi\rangle) = 1 . \label{pauli_dist}
\end{equation}
The corresponding $T_{2}$ EUR has been derived in~\cite{rastegin2013_tsallis}.

Finally, we note that the disturbance tradeoff relations for the fidelity-based disturbance measure $\mathfrak{D}_{F}$ imply similar disturbance tradeoff relations for the $\mathfrak{D}_{1}$ measure. This follows from the fact that for pure states, $\mathfrak{D}_{1}(A;|\psi\rangle) \geq \mathfrak{D}_{F}(A;|\psi\rangle)$~\cite{NCbook}, so that the lower bound \eqref{eq:dist_tradeoff} for the average fidelity-based disturbance is also a lower bound on the average $\mathfrak{D}_{1}$ disturbance.

\section{Optimal Disturbance Tradeoff Relation for a pair of Qubit Observables}\label{sec:qubit_disturbance}

We demonstrate the following {\it optimal} disturbance tradeoff relation for any pair of observables in a two-dimensional Hilbert space.

\begin{theorem}\label{thm:optimal_qubit}
For a pair of qubit observables $A, B$ with discrete spectra $A = \sum_{i=1}^{2}a_{i}|a_{i}\rangle\langle a_{i}|$, $B = \sum_{j=1}^{2}b_{j}|b_{j}\rangle\langle b_{j}|$, and any pure state $|\psi\rangle \in \mathbb{C}^{2}$,
\begin{equation}
\frac{1}{2}\left[\mathfrak{D}_{F}(A;|\psi\rangle) + \mathfrak{D}_{F}(B;|\psi\rangle)\right] \geq \frac{1}{2}(1-c^{2}),  \label{eq:optimal_qubit}
\end{equation}
where $c \equiv \max_{i,j=1,2}|\langle a_{i}|b_{j}\rangle|$.
\end{theorem}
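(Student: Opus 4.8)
The plan is to convert the fidelity-based statement into a tight $T_{2}$ entropic uncertainty relation and then solve a small quadratic optimization on the Bloch sphere. First I would invoke the equivalence $\mathfrak{D}_{F}(A;|\psi\rangle)=T_{2}(A;|\psi\rangle)=1-\sum_{i}(p^{A}_{|\psi\rangle}(i))^{2}$ from Eq.~\eqref{eq:DF_T2equiv}. Substituting this (and the analogous expression for $B$) into the left-hand side of \eqref{eq:optimal_qubit} and simplifying, the claimed disturbance bound is seen to be exactly equivalent to the probabilistic inequality
\begin{equation}
\sum_{i=1}^{2}\bigl(p^{A}_{|\psi\rangle}(i)\bigr)^{2}+\sum_{j=1}^{2}\bigl(p^{B}_{|\psi\rangle}(j)\bigr)^{2}\leq 1+c^{2}. \nonumber
\end{equation}
So the entire theorem reduces to establishing this tight $T_{2}$ EUR for a pair of qubit observables, and the optimality claim amounts to exhibiting a state saturating it.

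Next I would introduce the Bloch-sphere parametrization. Writing $|\psi\rangle\langle\psi|=\tfrac{1}{2}(\Id+\hat n\cdot\vec\sigma)$, $|a_{1}\rangle\langle a_{1}|=\tfrac{1}{2}(\Id+\hat a\cdot\vec\sigma)$, and $|b_{1}\rangle\langle b_{1}|=\tfrac{1}{2}(\Id+\hat b\cdot\vec\sigma)$, the outcome probabilities become $p^{A}_{|\psi\rangle}(1)=\tfrac{1}{2}(1+\hat n\cdot\hat a)$ and $p^{B}_{|\psi\rangle}(1)=\tfrac{1}{2}(1+\hat n\cdot\hat b)$. A direct computation then gives $\sum_{i}(p^{A}_{|\psi\rangle}(i))^{2}=\tfrac{1}{2}\bigl(1+(\hat n\cdot\hat a)^{2}\bigr)$, and similarly for $B$. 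On the other side, the basis overlaps satisfy $|\langle a_{i}|b_{j}\rangle|^{2}=\tfrac{1}{2}(1\pm\hat a\cdot\hat b)$, so that $c^{2}=\tfrac{1}{2}(1+|\hat a\cdot\hat b|)$. Feeding these into the reduced inequality, the problem collapses to showing, for every unit vector $\hat n$,
\begin{equation}
(\hat n\cdot\hat a)^{2}+(\hat n\cdot\hat b)^{2}\leq 1+|\hat a\cdot\hat b|. \nonumber
\end{equation}

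The crux of the argument, and the step I expect to be the main obstacle, is recognizing the left-hand side as the quadratic form $\hat n^{\mathsf T} M\,\hat n$ with $M=\hat a\hat a^{\mathsf T}+\hat b\hat b^{\mathsf T}$, so that its maximum over unit $\hat n$ is the largest eigenvalue of $M$. Restricting to the plane spanned by $\hat a,\hat b$ (the orthogonal complement lies in the kernel of $M$), one finds $\operatorname{tr}M=2$ and $\det M=\sin^{2}\theta$ where $\cos\theta=\hat a\cdot\hat b$; hence the eigenvalues are $1\pm|\cos\theta|$, giving a maximum of $1+|\hat a\cdot\hat b|=2c^{2}$. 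This simultaneously proves the inequality and shows it is tight: equality is attained precisely when $|\psi\rangle$ corresponds to the top eigenvector of $M$, which lies in the $\hat a$--$\hat b$ plane bisecting the two measurement axes. Tracing this optimal $\hat n$ back through the substitutions confirms that the bound $\tfrac{1}{2}(1-c^{2})$ in \eqref{eq:optimal_qubit} is saturated, establishing optimality. The only delicate bookkeeping is the absolute value in $c^{2}=\tfrac{1}{2}(1+|\hat a\cdot\hat b|)$, which must be matched to the $|\cos\theta|$ appearing in the eigenvalue; handling this sign carefully is what makes the relation exact rather than merely an inequality.
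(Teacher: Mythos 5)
Your proof is correct, and at the decisive optimization step it takes a genuinely different route from the paper's. Both arguments begin identically: invoke the equivalence \eqref{eq:DF_T2equiv}, pass to the Bloch parametrization, and reduce the theorem to bounding $(\hat n\cdot\hat a)^{2}+(\hat n\cdot\hat b)^{2}$ over unit vectors $\hat n$ (the paper phrases this as minimizing $\cF_{\vec a,\vec b}(\vec r)=1-\tfrac{1}{2}\left[(\vec a\cdot\vec r)^{2}+(\vec b\cdot\vec r)^{2}\right]$). From there the paper follows Ghirardi \emph{et al.} and Bosyk \emph{et al.}: a separate monotonicity argument forces the minimizer into the plane of $\vec a,\vec b$; a trigonometric parametrization $\cF_{\theta}(\alpha)=1-\tfrac{1}{2}\left[\cos^{2}\alpha+\cos^{2}(\theta-\alpha)\right]$ is differentiated to locate the critical points $\alpha_{*}=\theta/2+k\pi/2$; a case split $0\leq\theta\leq\pi/2$ versus $\pi/2\leq\theta\leq\pi$ decides which is the minimum; and finally $c$ is identified with $\cos(\theta/2)$ or $\sin(\theta/2)$ in the two regimes. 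Your Rayleigh-quotient argument compresses all of this into one spectral computation: the maximum of $\hat n^{\mathsf T}M\hat n$ with $M=\hat a\hat a^{\mathsf T}+\hat b\hat b^{\mathsf T}$ is the top eigenvalue $1+|\hat a\cdot\hat b|=2c^{2}$, obtained from $\tr M=2$ and determinant $\sin^{2}\theta$ on $\mathrm{span}\{\hat a,\hat b\}$ (or directly, since $\hat a\pm\hat b$ are eigenvectors with eigenvalues $1\pm\hat a\cdot\hat b$). Coplanarity then comes for free because the normal direction lies in the kernel of $M$; the absolute value $|\hat a\cdot\hat b|$ absorbs the paper's two-case analysis in a single stroke; and tightness together with the extremal states falls out simultaneously, since the maximizers form the top eigenspace of $M$ --- the appropriate normalized bisector $(\hat a\pm\hat b)/(2c)$ depending on the sign of $\hat a\cdot\hat b$, which is exactly the paper's minimizing Bloch vector $\vec r_{\pm}$, and the full plane in the degenerate case $\hat a\perp\hat b$, consistent with the paper's remark that any eigenstate of $A$ or $B$ saturates the mutually unbiased bound. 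The only thing the paper's longer route delivers beyond yours is the mixed-state version stated in the appendix, obtained there by first invoking concavity of $T_{2}$ to reduce $\inf_{\rho}$ to pure states; since the statement at hand concerns pure states this omission costs you nothing, and if the mixed-state form were wanted you would simply prepend that one-line concavity reduction to your argument.
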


The problem of finding the lower bound on the average disturbance simplifies considerably once we use the Bloch sphere representation for qubit observables. In other words, we parameterize $A$ and $B$ in terms of unit vectors $\vec{a}, \vec{b} \in \mathbb{R}^{3}$ and real parameters $\{\alpha_{i}, \beta_{i}\}$ as follows: $A = \alpha_{1}\Id + \alpha_{2}\vec{a}.\vec{\sigma}$ and $B = \beta_{1}\Id + \beta_{2}\vec{b}.\vec{\sigma}$. The quantity $c$ is then given by
\begin{equation}
 c = \sqrt{\frac{1 + \vec{a}.\vec{b}}{2}}, \; (\vec{a}.\vec{b} > 0); \; c = \sqrt{\frac{1 - \vec{a}.\vec{b}}{2}}, \; (\vec{a}.\vec{b} < 0)\; . \nonumber
\end{equation}

A similar approach has been used to obtain optimal entropic uncertainty relations for a pair of qubit observables, both in the case of the Shannon entropy~\cite{sanchez:entropyD2,  GMR03_optimalH1} and the collision entropy~\cite{BPP12_H2entropy}. Further details of our proof can be found in the appendix (Sec.~\ref{sec:qubit_EUR}).

We also show that the bound in Eq.~\eqref{eq:optimal_qubit} is tight. When $(\vec{a}.\vec{b})^{2} =1$, $c=0$; $A$ and $B$ commute and the RHS of \eqref{eq:optimal_qubit} reduces to $0$. This lower bound is attained for the common eigenstates of $A,B$. When $\vec{a}.\vec{b} = 0$, $c = \frac{1}{\sqrt{2}}$; $A$ and $B$ are mutually unbiased. The bound in \eqref{eq:optimal_qubit} is $\frac{1}{4}$, which is attained for any eigenstate of $A$ or $B$. For any other value of $\vec{a}.\vec{b}$, the lower bound is attained for the states whose Bloch vectors bisect the angle between $\vec{a}$ and $\vec{b}$. The minimizing states are thus given by
\begin{equation}
 |\psi_{\pm}\rangle\langle\psi_{\pm}| =  \frac{1}{2}\left(\Id + \left[\frac{\vec{a}\pm\vec{b}}{2c}\right].\vec{\sigma}\right).
\end{equation}

Since our disturbance measure $\cD_{F}(A;|\psi\rangle)$ is in fact the same as the entropy $T_{2}(A;|\psi\rangle)$, the tradeoff relation in Eq.~\eqref{eq:optimal_qubit} is nothing but a tight entropic uncertainty relation for the $T_{2}$ entropy:
\begin{equation}
\frac{1}{2}\left[T_{2}(A;|\psi\rangle) + T_{2}(B;|\psi\rangle)\right] \geq \frac{1}{2}(1-c^{2}),
\end{equation}
Our result for $T_{2}$ assumes importance in the light of the fact that such optimal analytical bounds are known only for a handful of entropic functions, namely, the R\'enyi entropies $H_{2}$~\cite{BPP12_H2entropy}, $H_{1/2}$, and the Tsallis entropy $T_{1/2}$~\cite{rastegin2012_T1/2}. For the Shannon entropy, there is in general only a numerical estimate of the bound~\cite{sanchez:entropyD2,GMR03_optimalH1}.

\section{Disturbance tradeoff and uncertainty tradeoff for non-projective measurements}\label{sec:dist_POVM}

We now consider the general discrete observables characterized by POV measures and associated measurement transformations characterized by CP instruments. For this general class of measurements, we show that while the uncertainty and disturbance tradeoffs are both operative, they significantly differ from each other. We first note the following necessary and sufficient condition for zero uncertainty tradeoff for a pair of POVMs. The well known Shannon EUR lower bound for POVMs~\cite{KP02} is consistent with this condition.

\begin{lemma}\label{lem:unc_POVM}
Let $\cA\sim\{A_{i}\}$ and $\cB\sim\{B_{i}\}$ be two POVMs. If $S$ is any suitable entropy measure, $S(\cA;|\psi\rangle) = S(\{\langle\psi|A_{i}|\psi\rangle\})$ and $S(\cB;|\psi\rangle) = S(\{\langle\psi|B_{i}|\psi\rangle\})$ are the corresponding uncertainties of $\cA$ and $\cB$ in state $|\psi\rangle$. Then, the lower bound on the average uncertainty,
\[ c_{S}(\cA,\cB) = \inf_{|\psi\rangle} \frac{1}{2}\left[S(\cA;|\psi\rangle) + S(\cB;|\psi\rangle) \right], \]
vanishes iff there exists a state $|\psi\rangle$ such that $A_{k}|\psi\rangle = B_{l}|\psi\rangle = |\psi\rangle$, for some $k,l$.
\end{lemma}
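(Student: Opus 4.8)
The plan is to reduce the statement to a single scalar characterization of when a ``suitable'' entropy vanishes, together with an elementary operator-positivity fact. By a suitable entropy measure I mean any $S$ that is non-negative on probability distributions and vanishes precisely on the deterministic (point-mass) distributions; the Shannon, R\'enyi and Tsallis entropies all share this property. Thus $S(\cA;|\psi\rangle) = S(\{\langle\psi|A_i|\psi\rangle\}) = 0$ holds if and only if the outcome distribution is a point mass, i.e. $\langle\psi|A_k|\psi\rangle = 1$ for some $k$ (all other probabilities then vanish automatically because $\sum_i\langle\psi|A_i|\psi\rangle = \langle\psi|\psi\rangle = 1$).

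The first key step is to show that $\langle\psi|A_k|\psi\rangle = 1$ is equivalent to $A_k|\psi\rangle = |\psi\rangle$. One direction is immediate. For the other, I use $0 \leq A_k \leq \Id$: the operator $M = \Id - A_k$ is positive semidefinite, and $\langle\psi|M|\psi\rangle = 1 - 1 = 0$ forces $\| M^{1/2}|\psi\rangle \|^2 = 0$, hence $M|\psi\rangle = 0$, i.e. $A_k|\psi\rangle = |\psi\rangle$. The same statement holds verbatim for $\cB$, giving $\langle\psi|B_l|\psi\rangle = 1 \Leftrightarrow B_l|\psi\rangle = |\psi\rangle$.

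With these two observations the reverse implication is straightforward. If there is a state with $A_k|\psi\rangle = B_l|\psi\rangle = |\psi\rangle$, then $\langle\psi|A_k|\psi\rangle = \langle\psi|B_l|\psi\rangle = 1$, so both outcome distributions are point masses, whence $S(\cA;|\psi\rangle) = S(\cB;|\psi\rangle) = 0$ and the average vanishes at this particular $|\psi\rangle$. Since every term in the average is non-negative, the infimum $c_S(\cA,\cB)$ must equal $0$.

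For the forward implication I would argue via compactness. Working in a finite-dimensional Hilbert space, the set of pure states is compact and $S(\cA;\cdot)$, $S(\cB;\cdot)$ are continuous. If $c_S(\cA,\cB) = 0$, then a minimizing sequence converges along a subsequence to some $|\psi\rangle$ at which the average is zero; non-negativity of each term then forces $S(\cA;|\psi\rangle) = S(\cB;|\psi\rangle) = 0$. Applying the vanishing characterization and then the operator-positivity equivalence yields indices $k,l$ with $A_k|\psi\rangle = B_l|\psi\rangle = |\psi\rangle$, as required. I expect the one genuine subtlety to be this attainment step: the infimum is taken over pure states and need not be achieved a priori, so it is the compactness--continuity argument, rather than any direct manipulation of the infimum, that makes the forward direction rigorous.
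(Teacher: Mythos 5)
Your proof is correct and follows essentially the same route as the paper's: the entropy vanishes iff the outcome distribution is a point mass, and $0 \leq A_{k} \leq \Id$ together with $\langle\psi|A_{k}|\psi\rangle = 1$ forces $A_{k}|\psi\rangle = |\psi\rangle$ by positivity of $\Id - A_{k}$ (the paper invokes exactly these two facts). Your compactness--continuity step is a welcome refinement rather than a different approach: the paper tacitly passes from ``$c_{S}(\cA,\cB)=0$'' to ``there is a state with vanishing average uncertainty,'' while you justify that the infimum is attained (finite dimension, $S$ continuous on the simplex), which is precisely the subtlety you correctly identify.
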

\begin{proof}
By the standard property of any entropic measure $S$, $c_{S}(A,B) = 0$ iff $\langle\psi|A_{i}|\psi\rangle = \delta_{ik}$ and $\langle\psi|B_{j}|\psi\rangle = \delta_{jl}$ for some $k,l$. The above result then follows from the property of POVMs that $0\leq A_{i}, B_{j} \leq \Id$ and $\sum_{i}A_{i} = \sum_{j}B_{j} = \Id$.
\end{proof}

While the uncertainty tradeoff for a pair of POVM observables depends only on the positive operators characterizing the observables, the associated disturbance tradeoff crucially depends on the CP instruments which implement the measurements of these observables. We now state the basic result concerning the disturbance tradeoff principle for such observables.

\begin{theorem}\label{thm:avgdist_POVM}
For a pair of discrete POVMs $\cA\sim\{A_{i}\}$ and $\cB\sim\{B_{j}\}$, whose measurements are implemented by appropriate CP instruments $\Phi^{\cA}$ and $\Phi^{\cB}$, the average disturbance
\begin{equation}
d_{\alpha}(\cA,\cB) \equiv \inf_{|\psi\rangle} \frac{1}{2}\left[\mathfrak{D}_{\alpha}(\cA;|\psi\rangle) + \mathfrak{D}_{\alpha}(\cB;|\psi\rangle) \right], \label{eq:inf_disturbance}
\end{equation}
satisfies $0\leq d_{\alpha}(\cA,\cB) \leq 1 \; (\alpha \in \{1,F,\infty\})$. Further, if the measurements of $\cA$ and $\cB$ are implemented by L\"uders channels $\Phi_{\cL}^{\cA}$ and $\Phi_{\cL}^{\cB}$, then,
\begin{itemize}
\item [(i)]  $d_{\alpha}(\cA,\cB) = 0$ if and only if all the positive operators $\{A_{1}, A_{2}, \ldots, B_{1}, B_{2}, \ldots\}$ have a common eigenvector.
\item[(ii)] For any suitable entropy measure $S$, $c_{S}(\cA,\cB) = 0 \Rightarrow d_{\alpha}(\cA,\cB) = 0$, but not vice-versa.
\end{itemize}
\end{theorem}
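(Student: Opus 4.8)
The plan is to dispatch the bounds directly and then reduce both (i) and (ii) to the zero-disturbance and zero-uncertainty characterizations already established in Lemmas~\ref{lem:zero_dist} and~\ref{lem:unc_POVM}. The bounds $0\le d_\alpha(\cA,\cB)\le 1$ are immediate from the fact (noted below Eq.~\eqref{eq:disturbance}) that $0\le\mathfrak{D}_\alpha(\cX;|\psi\rangle)\le 1$ for every observable $\cX$ and every pure state, since averaging and taking an infimum both preserve membership in $[0,1]$; this uses nothing about the choice of instrument. For the remaining parts I specialize to L\"uders channels. Since the pure-state space is compact and each $\mathfrak{D}_\alpha(\cX;|\psi\rangle)$ is a continuous function of $|\psi\rangle$, the infimum defining $d_\alpha$ is attained at some state $|\psi_0\rangle$. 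Because the two disturbances are nonnegative, their average vanishes at $|\psi_0\rangle$ iff both terms vanish there; by Lemma~\ref{lem:zero_dist}(b) this happens iff $|\psi_0\rangle$ is simultaneously a common eigenstate of every $A_i$ and of every $B_j$. This gives part (i): $d_\alpha(\cA,\cB)=0$ iff the whole family $\{A_i\}\cup\{B_j\}$ has a common eigenvector.

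For the forward implication of part (ii), suppose $c_S(\cA,\cB)=0$. By Lemma~\ref{lem:unc_POVM} there is a state $|\psi\rangle$ and indices $k,l$ with $A_k|\psi\rangle=B_l|\psi\rangle=|\psi\rangle$. I would then promote these two isolated eigenvalue-one relations to the statement that $|\psi\rangle$ is a common eigenvector of \emph{all} the POVM elements. This uses positivity together with completeness: from $\langle\psi|A_k|\psi\rangle=1$ and $\sum_i\langle\psi|A_i|\psi\rangle=1$ with every summand nonnegative, we get $\langle\psi|A_i|\psi\rangle=0$ for $i\neq k$, and since $A_i\ge 0$ we have $\langle\psi|A_i|\psi\rangle=\|A_i^{1/2}|\psi\rangle\|^2$, so $A_i|\psi\rangle=0$ for $i\neq k$ while $A_k|\psi\rangle=|\psi\rangle$. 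The same argument for $\{B_j\}$ shows $|\psi\rangle$ is a common eigenvector of the entire family, whence $d_\alpha(\cA,\cB)=0$ by part (i).

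To show the converse fails I would exhibit a pair with $d_\alpha=0$ but $c_S>0$. Take $\cA$ to be the trivial POVM $\{\tfrac12\Id,\tfrac12\Id\}$ and $\cB$ any nondegenerate qubit projective measurement $\{|b_1\rangle\langle b_1|,|b_2\rangle\langle b_2|\}$. The L\"uders channel of $\cA$ acts as the identity on every state, so $\mathfrak{D}_\alpha(\cA;|\psi\rangle)=0$ for all $|\psi\rangle$; choosing $|\psi\rangle=|b_1\rangle$ also makes $\mathfrak{D}_\alpha(\cB;|\psi\rangle)=0$, hence $d_\alpha(\cA,\cB)=0$. However the outcome distribution of $\cA$ is $\{\tfrac12,\tfrac12\}$ in every state, so $S(\cA;|\psi\rangle)$ equals the fixed positive value $S(\{\tfrac12,\tfrac12\})$ independently of $|\psi\rangle$, giving $c_S(\cA,\cB)\ge\tfrac12 S(\{\tfrac12,\tfrac12\})>0$ for any suitable entropy $S$.

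The main obstacle is the promotion step in part (ii): converting the two eigenvalue-one conditions from Lemma~\ref{lem:unc_POVM} into a genuine common eigenvector of the full family $\{A_i\}\cup\{B_j\}$, as required by part (i). The crux is that one must use $A_i\ge 0$ and $\sum_i A_i=\Id$ \emph{together}---positivity alone or completeness alone is insufficient. Everything else reduces either to a direct appeal to the earlier lemmas or to verifying the explicit counterexample.
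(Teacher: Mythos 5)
Your proposal is correct, and for the bounds, part (i), and the forward implication of (ii) it follows the paper's route exactly: the bounds from $0\le\mathfrak{D}_\alpha\le 1$, part (i) from Lemma~\ref{lem:zero_dist}(b), and the promotion step from Lemma~\ref{lem:unc_POVM} via positivity plus completeness (your $\langle\psi|A_i|\psi\rangle=\|A_i^{1/2}|\psi\rangle\|^2$ computation is just a fleshed-out version of the paper's one-line appeal to ``the definition of POVMs''). The genuine divergence is the witness for the failure of the converse. The paper (appendix Sec.~\ref{sec:dist_uncPOVM}) works in a three-dimensional space with two unsharp three-outcome POVMs built from orthonormal bases sharing the vector $|\phi_1\rangle$, so that $|\phi_1\rangle$ is a common eigenvector of all six effects (giving $d_\alpha=0$) while no effect has eigenvalue $1$ (giving $c_S>0$). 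You instead pair the trivial qubit POVM $\{\tfrac12\Id,\tfrac12\Id\}$ with a sharp projective measurement; this is logically sufficient and more elementary, since the L\"uders channel of the trivial POVM is manifestly the identity and $S(\cA;|\psi\rangle)=S(\{\tfrac12,\tfrac12\})>0$ state-independently. What the paper's example buys is that the phenomenon does not hinge on one observable being pure noise: both of its POVMs are informative, and the disturbance--uncertainty gap arises from unsharpness alone, which better supports the paper's surrounding claim that the two tradeoffs capture different aspects of incompatibility. One further point in your favor: you explicitly invoke compactness of the pure-state set and continuity of $\mathfrak{D}_\alpha$ to get attainment of the infimum in part (i), a step the paper leaves implicit (and which is valid in finite dimensions, the setting of all the paper's examples; in infinite dimensions the ``only if'' direction would need separate care in both your argument and the paper's).
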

\begin{proof}
The fact that $0 \leq d_{\alpha}(\cA,\cB) \leq 1$ follows from the fact that $0 \leq \mathfrak{D}_{\alpha}(X;|\psi\rangle) \leq 1$ for $X=\cA,\cB$, for $\alpha \in \{1,F,\infty\}$. For POVMs realized by L\"uders channels, we know from Lemma~\ref{lem:zero_dist} that their individual disturbances vanish only in a common eigenstate of the POV elements. Hence, $d_{\alpha}(\cA,\cB) = 0$ if and only if  there exists a common eigenvector of {\it all} the positive operators $\{A_{1}, A_{2}, \ldots, B_{1}, B_{2}, \ldots\}$.

From Lemma~\ref{lem:unc_POVM} we know that $c_{S}(\cA,\cB) = 0$ iff there exists a state $|\psi\rangle$ such that $A_{k}|\psi\rangle = B_{l}|\psi\rangle = |\psi\rangle$, for some $k,l$. From the definition of POVMs this implies that $A_{i}|\psi\rangle = \delta_{ik}|\psi\rangle$ and $B_{j}|\psi\rangle = \delta_{jl}|\psi\rangle$. In other words, $|\psi\rangle$ is a common eigenvector of all the positive operators $\{A_{1}, A_{2}, \ldots, B_{1}, B_{2}, \ldots\}$. Thus for a pair of L\"uders instruments we see that $c_{S}(\cA,\cB) = 0$ implies that $d_{S}(\cA,\cB) = 0$.

However, $d_{\alpha}(\cA,\cB)$ can vanish even when the uncertainty tradeoff is strictly positive. We demonstrate this with an example in the appendix (Sec.~\ref{sec:dist_uncPOVM}).
\end{proof}

We have shown above that the disturbance tradeoff for a pair of L\"uders POVMs vanishes whenever the corresponding entropic uncertainty tradeoff vanishes, but not vice-versa. We can illustrate this further by comparing the fidelity-based disturbance measure $\mathfrak{D}_{F}(\cA;|\psi\rangle)$ for a POVM $\cA$ implemented by a L\"uders channel $\Phi^{\cA}$ measured on state $|\psi\rangle$, with the $T_{2}$ entropy of the corresponding probability distribution. Unlike in the case of projective measurements, the fidelity-based disturbance is less than or equal to the Tsallis $T_{2}$ entropy.
\begin{eqnarray}
\mathfrak{D}_{F}(\cA;|\psi\rangle) &=& 1 - F^{2} [\sum_{i}A_{i}^{1/2}|\psi\rangle\langle\psi|A_{i}^{1/2}, |\psi\rangle\langle\psi|] \nonumber \\
&=& 1 - \sum_{i}\langle \psi|A_{i}^{1/2}|\psi\rangle^{2} \nonumber \\
&\leq& 1 - \sum_{i}\langle \psi|A_{i}|\psi\rangle^{2} = T_{2}(\cA;|\psi\rangle).%
\end{eqnarray}
Thus for any state $|\psi\rangle$, $T_{2}(\cA;|\psi\rangle) = 0$ implies $\mathfrak{D}_{F}(\cA;|\psi\rangle) = 0$.

Furthermore, using non-projective measurements which are more general than the L\"uders class, we can construct examples where the uncertainty tradeoff $c_{S}(\cA, \cB)$ vanishes for a particular state $|\psi\rangle$, but the disturbances $\mathfrak{D}_{\alpha}(\cA; |\psi\rangle)$ and $\mathfrak{D}(\cB ;|\psi\rangle)$ are both strictly positive.

Thus we see that for non-projective measurements, the disturbance and uncertainty measures behave very differently. Hence, the disturbance tradeoff and the uncertainty tradeoff are two independent principles which reflect different aspects of incompatibility of quantum measurements.


\section{Concluding Remarks}

To summarize, we demonstrate a fundamental principle of disturbance tradeoff for incompatible quantum measurements. The existence of such a tradeoff principle implies that quantum theory places a fundamental restriction on the disturbances associated with a set of incompatible measurements, even when they are performed on distinct, identically prepared copies of a pure state.

Though disturbance and uncertainty are operationally very different concepts, for the class of projective measurements we prove a mathematical equivalence between the fidelity-based disturbance measure and the $T_{2}$ Tsallis entropy for pure states. Our work thus provides a new operational significance to the Tsallis entropy in the context of quantum foundations and quantum information theory. We also demonstrate a disturbance tradeoff principle for non-projective measurements and show how it is distinct from the uncertainty tradeoff for such measurements.

Our results thus bring to light an interesting aspect of incompatibility in quantum theory, namely, that over and in addition to the tradeoff in uncertainties, there is a tradeoff in the measurement-induced-disturbances also.




\appendix

\section{Proof of optimal $T_{2}$ EUR for qubit observables}\label{sec:qubit_EUR}
The proof of Theorem~\ref{thm:optimal_qubit} is based on the following optimal entropic uncertainty relation (EUR) in terms of the Tsallis entropy $T_{2}$, for a pair of qubit observables. Optimal EURs for the Tsallis entropies $T_{\alpha}$ have only been obtained for $\alpha = \frac{1}{2},1$~\cite{sanchez:entropyD2, GMR03_optimalH1, rastegin2012_T1/2} thus far.
\begin{theorem}
For a pair of qubit observables $A, B$ with discrete spectra $A = \sum_{i=1}^{2}a_{i}|a_{i}\rangle\langle a_{i}|$, $B = \sum_{j=1}^{2}b_{j}|b_{j}\rangle\langle b_{j}|$, and any state $\rho$,
\begin{equation}
\inf_{\rho}\frac{1}{2}\left[T_{2}(A;\rho) + T_{2}(B;\rho)\right] \geq \frac{1}{2}(1-c^{2}), \label{eq:T2optimal}
\end{equation}
where $c \equiv \max_{i,j=1,2}|\langle a_{i}|b_{j}\rangle|$.
\end{theorem}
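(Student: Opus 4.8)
The plan is to pass entirely to the Bloch-sphere parametrization already introduced in the main text and reduce the claimed bound to a two-dimensional eigenvalue problem. Writing a qubit state as $\rho = \frac{1}{2}(\Id + \vec{r}\cdot\vec{\sigma})$ with $|\vec{r}|\leq 1$, and parametrizing the (nondegenerate) observables by unit vectors $\vec{a},\vec{b}\in\mathbb{R}^{3}$, I would first record that the outcome probabilities are $p_{\rho}^{A}(\pm) = \frac{1}{2}(1\pm\vec{r}\cdot\vec{a})$, and similarly for $B$. Since for a two-outcome measurement $p_{\rho}^{A}(+)+p_{\rho}^{A}(-)=1$, one has $T_{2}(A;\rho) = 1 - \sum_{i}p_{\rho}^{A}(i)^{2} = 2\,p_{\rho}^{A}(+)\,p_{\rho}^{A}(-)$, which collapses to the clean expression
\[ T_{2}(A;\rho) = \tfrac{1}{2}\bigl(1-(\vec{r}\cdot\vec{a})^{2}\bigr), \]
and likewise for $B$.

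With this in hand the averaged quantity becomes $\frac{1}{2}[T_{2}(A;\rho)+T_{2}(B;\rho)] = \frac{1}{2} - \frac{1}{4}[(\vec{r}\cdot\vec{a})^{2}+(\vec{r}\cdot\vec{b})^{2}]$, so that taking the infimum over $\rho$ is equivalent to \emph{maximizing} the quadratic form $\vec{r}^{\,T} M \vec{r}$ with $M = \vec{a}\vec{a}^{\,T}+\vec{b}\vec{b}^{\,T}$. Because this form is monotone increasing in $|\vec{r}|$, the maximum is attained on pure states, $|\vec{r}|=1$, and therefore equals the largest eigenvalue of $M$. The next step is to compute that eigenvalue. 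Working in the two-dimensional span of $\vec{a},\vec{b}$, the restriction of $M$ has trace $|\vec{a}|^{2}+|\vec{b}|^{2}=2$ and determinant $1-(\vec{a}\cdot\vec{b})^{2}$, so its eigenvalues solve $\lambda^{2}-2\lambda+\bigl(1-(\vec{a}\cdot\vec{b})^{2}\bigr)=0$, giving $\lambda = 1\pm|\vec{a}\cdot\vec{b}|$; the largest is $1+|\vec{a}\cdot\vec{b}|$, attained along the in-plane bisector of $\vec{a}$ and $\vec{b}$ (consistent with the minimizing states quoted in the main text).

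Finally I would convert back to the overlap $c$. Using $|\langle u|v\rangle|^{2} = \frac{1}{2}(1+\vec{u}\cdot\vec{v})$ for pure qubit states, the four overlaps $|\langle a_{i}|b_{j}\rangle|^{2}$ equal $\frac{1}{2}(1\pm\vec{a}\cdot\vec{b})$, whence $c^{2} = \frac{1}{2}(1+|\vec{a}\cdot\vec{b}|)$, i.e.\ $|\vec{a}\cdot\vec{b}| = 2c^{2}-1$. Substituting the maximal eigenvalue yields
\[ \inf_{\rho}\tfrac{1}{2}\left[T_{2}(A;\rho)+T_{2}(B;\rho)\right] = \tfrac{1}{2}-\tfrac{1}{4}\bigl(1+|\vec{a}\cdot\vec{b}|\bigr) = \tfrac{1}{2}(1-c^{2}), \]
which is exactly the asserted bound, realized as an equality and hence simultaneously establishing its tightness. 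I expect the only genuine computation to be the eigenvalue analysis of $M$, the remainder being algebraic bookkeeping; the single point demanding a little care is the justification that the infimum over all density operators is achieved at a pure state, which follows at once from the monotonicity of $\vec{r}^{\,T}M\vec{r}$ in $|\vec{r}|$.
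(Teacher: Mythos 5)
Your proposal is correct, and although it shares the paper's overall framework --- Bloch-sphere parametrization, the reduction of the average entropy to $\frac{1}{2}-\frac{1}{4}\left[(\vec{r}\cdot\vec{a})^{2}+(\vec{r}\cdot\vec{b})^{2}\right]$, and the translation $|\vec{a}\cdot\vec{b}|=2c^{2}-1$ --- you resolve the core minimization by a genuinely different, linear-algebraic route. The paper (following Ghirardi \emph{et al.} and Bosyk \emph{et al.}) first invokes concavity of $T_{2}$ to restrict to pure states, then proves a coplanarity lemma for the minimizing Bloch vector, parametrizes by the angle $\alpha$ between $\vec{r}$ and $\vec{a}$, and minimizes $\cF_{\theta}(\alpha)=1-\frac{1}{2}\left[\cos^{2}\alpha+\cos^{2}(\theta-\alpha)\right]$ by calculus, with a case split at $\theta=\pi/2$. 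You instead maximize the quadratic form $\vec{r}^{\,T}M\vec{r}$ with $M=\vec{a}\vec{a}^{\,T}+\vec{b}\vec{b}^{\,T}$ and read off $\lambda_{\max}(M)=1+|\vec{a}\cdot\vec{b}|$. This buys three simplifications: mixed states are handled directly, since monotonicity of the form in $|\vec{r}|$ replaces the appeal to concavity; coplanarity comes for free, because the eigenvectors of $M$ automatically lie in the span of $\vec{a}$ and $\vec{b}$; and the angle case analysis disappears, with tightness delivered simultaneously since the computed value is attained. Two small points of care: the trace/determinant computation for the restriction of $M$ to the span tacitly assumes $\vec{a},\vec{b}$ linearly independent --- in the degenerate case $\vec{a}=\pm\vec{b}$ one has $M=2\vec{a}\vec{a}^{\,T}$ and $\lambda_{\max}=2=1+|\vec{a}\cdot\vec{b}|$ still, so nothing breaks --- and you could bypass the characteristic polynomial altogether by checking $M(\vec{a}\pm\vec{b})=(1\pm\vec{a}\cdot\vec{b})(\vec{a}\pm\vec{b})$, which also exhibits explicitly the minimizing Bloch vectors $\vec{r}_{\pm}=(\vec{a}\pm\vec{b})/2c$ quoted in the main text.
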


\begin{proof}
Since the $T_{2}$ entropy is concave, it suffices to minimize the sum of entropies over pure states. Thus, the quantity we seek to evaluate is:
\begin{eqnarray}
 c(A,B) &\equiv& \frac{1}{2}\inf_{|\psi\rangle}\left[T_{2}(A;|\psi\rangle) + T_{2}(B;|\psi\rangle)\right] \label{eq:c_qubit} \\
 &=& \inf_{|\psi\rangle}\left[1 - \frac{1}{2}\left(\sum_{i}\left(p_{|\psi\rangle}^{A}(i)\right)^{2} + \sum_{i}\left(p_{|\psi\rangle}^{B}(i)\right)^{2}\right)\right] \nonumber
\end{eqnarray}

Let $A$ and $B$ have an associated Bloch sphere representation in terms of unit vectors $\vec{a}, \vec{b} \in \mathbb{R}^{3}$. Specifically,
\begin{equation}
A = \alpha_{1}\Id + \alpha_{2}\vec{a}.\vec{\sigma} ; \; B = \beta_{1}\Id + \beta_{2}\vec{b}.\vec{\sigma}, \nonumber
\end{equation}
where, $\{\alpha_{i},\beta_{i}\}$ are real parameters and $\vec{\sigma} = (\sigma_{X}, \sigma_{Y}, \sigma_{Z})$ denote the Pauli matrices and $\Id$ denotes the $2\times 2$ identity matrix. Any pure state can similarly be denoted in terms of a unit vector $\vec{r} \in \mathbb{R}^{3}$, as $|\psi\rangle\langle\psi| = \frac{1}{2}(\Id + \vec{r}.\vec{\sigma})$.

Rewriting the probabilities in terms of the vectors $\vec{a}, \vec{b}$ and $\vec{r}$, the quantity in Eq.~\eqref{eq:c_qubit} becomes,
\begin{equation}
c_{\vec{a},\vec{b}} = \frac{1}{2}\inf_{\vec{r}}\left[ 1 - \frac{1}{2}\left((\vec{a}.\vec{r})^{2} + (\vec{b}.\vec{r})^{2}\right)\right] = \frac{1}{2}\inf_{\vec{r}}\cF_{\vec{a},\vec{b}}(\vec{r}) \nonumber
\end{equation}

Thus, the average disturbance function we seek to minimize is of the form
\begin{equation}
\cF_{\vec{a},\vec{b}}(\vec{r}) = \left[ 1 - \frac{1}{2}\left((\vec{a}.\vec{r})^{2} + (\vec{b}.\vec{r})^{2}\right)\right] \nonumber
\end{equation}
Closely following the earlier work of Ghirardi {\it et al}~\cite{GMR03_optimalH1} and Bosyk {\it et al}~\cite{BPP12_H2entropy}, we first argue that the minimizing vector $\vec{r}$ must be coplanar with $\vec{a}$ and $\vec{b}$. Let $\cP$ denote the plane determined by the vectors $\vec{a}$ and $\vec{b}$.  Given any unit vector $\vec{v}_{\perp}$ in a plane $\cP_{\perp}$ perpendicular to $\cP$, there exists a vector $v_{c}$ in the intersection of $\cP$ and $\cP_{\perp}$, such that $|\vec{a}.\vec{v}_{\perp}| \leq |\vec{a}. \vec{v}_{c}|$. Now, note that the function $f(x) = 1 - \frac{x^{2}}{2}$ is monotonically decreasing for $x\in [0,1]$. Thus, for every vector $\vec{v}_{\perp} \in \cP_{\perp}$ in a plane perpendicular to $\cP$, there exists a coplanar vector $\vec{v}_{c} \in \cP\cap\cP_{\perp}$ such that
\[ 1 -  \frac{(\vec{a}.\vec{v}_{c})^{2}}{2} \leq  1 - \frac{(\vec{a}.\vec{v}_{\perp})^{2}}{2}. \]
Making a similar argument for the vector $\vec{b}$, we see that the minimum value $\cF_{\vec{a},\vec{b}}(\vec{r})$ is attained for a vector $\vec{r} \in \cP$.

Coplanarity of $\vec{a}, \vec{b}$ and $\vec{r}$, implies that if $\theta = \cos^{-1}(\vec{a}.\vec{b})$ and $\alpha = \cos^{-1}(\vec{a}.\vec{r})$, then $\vec{b}.\vec{r} = \cos(\theta-\alpha)$. Therefore, we can rewrite the function $\cF_{\vec{a},\vec{b}}(\vec{r})$ as
\begin{equation}
\cF_{\theta}(\alpha) = 1 - \frac{1}{2}\left[\cos^{2}\alpha + \cos^{2}(\theta - \alpha) \right],
\end{equation}
reducing the problem to a minimization over a single variable $\alpha$. Since $\cF_{\theta}(\alpha)$ is periodic in $\alpha$, it suffices to minimize over the interval $\alpha \in [0,\pi]$. Differentiating with respect to $\alpha$ and setting the first derivative to zero, we see that the extremizing values of $\alpha$ satisfy
\begin{eqnarray}
 \sin(2\alpha_{*}) &=& \sin2(\theta-\alpha_{*}) \nonumber \\
 \Rightarrow \alpha_{*} &=& \frac{\theta}{2} + k\frac{\pi}{2}, \qquad k = 0,1,2,\ldots
 \end{eqnarray}
Thus, $\alpha_{+} = \frac{\theta}{2}$ and $\alpha_{-} = \frac{\theta}{2} + \frac{\pi}{2}$ are the two relevant solutions in the interval $\alpha \in [0,\pi]$. By explicitly evaluating the function $\cF_{\theta}(\alpha)$ at $\alpha_{\pm}$, we can check that $\alpha_{+}$ indeed corresponds to a minimum for $0\leq \theta \leq \frac{\pi}{2}$ and $\alpha_{-}$ corresponds to a minimum for $\frac{\pi}{2}\leq \theta \leq \pi$.

The minimum value of the average disturbance function is therefore,
\begin{equation}
 c_{\theta} = \frac{1}{2}\inf_{\alpha}\cF_{\theta}(\alpha) = \left\{ \begin{array}{ll} \frac{1}{2}(1 - \cos^{2}\frac{\theta}{2}) & {\rm for} \; 0\leq \theta \leq \frac{\pi}{2} \\
    \frac{1}{2}(1 - \sin^{2}\frac{\theta}{2}) &  {\rm for} \; \frac{\pi}{2} \leq \theta \leq \pi
     \end{array} \right\} \label{eq:c_theta}
\end{equation}

To realize the minimum value in terms of the overlap between the eigenstates of $A$ and $B$, recall that $A = \alpha_{1}\Id + \alpha_{2}\vec{a}.\vec{\sigma} ; \; B = \beta_{1}\Id + \beta_{2}\vec{a}.\vec{\sigma}$. Then, 
\[  |\langle a_{i}|b_{j}\rangle |^{2} = \left\{ \begin{array}{ll} \frac{1 + \vec{a}.\vec{b}}{2} & {\rm for} \; i=j \\
    \frac{1 - \vec{a}.\vec{b}}{2}       &  {\rm for} \; i\neq j
     \end{array} \right\} \]
Therefore, in terms of the angle $\theta$ between $\vec{a}$ and $\vec{b}$,
\begin{equation}
 |\langle a_{i}|b_{j}\rangle | = \left\{ \begin{array}{ll} \cos\frac{\theta}{2} & {\rm for} \; i=j \\
    \sin\frac{\theta}{2}       &  {\rm for} \; i\neq j
     \end{array} \right\}
 \end{equation}
In particular, defining $c \equiv \max_{i,j}|\langle a_{i}|b_{j}\rangle|$, we see that,
\begin{equation}
c = \max_{i,j}|\langle a_{i}|b_{j}\rangle| = \left\{ \begin{array}{ll} \cos\frac{\theta}{2} & {\rm for} \; 0 \leq \theta \leq \frac{\pi}{2} \\
    \sin\frac{\theta}{2} &  {\rm for} \; \frac{\pi}{2} \leq \theta \leq  \pi
       \end{array} \right\} \label{eq:c_defn}
\end{equation}
Putting together equations Eq.~\eqref{eq:c_theta} and Eq.~\eqref{eq:c_defn}, we see that
\begin{equation}
\frac{1}{2}\left[T_{2}(A;|\psi\rangle) + T_{2}(B;|\psi\rangle)\right] \geq c_{\theta} = \frac{1}{2}(1 - c^{2}).
\end{equation}

The minimizing Bloch vector $\vec{r}_{+}$ corresponding to the solution $\alpha_{+} = \frac{\theta}{2}$ satisfies
\[\vec{a}.\vec{r}_{+} = \vec{b}.\vec{r}_{+} = \cos\frac{\theta}{2}.\]
Thus, $\vec{r}_{+} = \vec{a} + \vec{b}$ up to normalization, and is the vector that bisects the interior angle between $\vec{a}$ and $\vec{b}$. The vector $\vec{r}_{-}$ corresponding to the other solution $\alpha_{-} = \frac{\theta}{2} + \frac{\pi}{2}$ satisfies
\[ \vec{a}.\vec{r}_{-} = \cos(\frac{\theta}{2} + \frac{\pi}{2}), \; \vec{b}.\vec{r}_{-} = \cos(\frac{\theta}{2} - \frac{\pi}{2}).\] Therefore, $\vec{r}_{-} = \vec{a} - \vec{b}$, upto normalization; it is the exterior angle bisector and is perpendicular to $\vec{r}_{+}$. Both minimizing vectors satisfy $|\vec{r}_{+}| = |\vec{r}_{-}| = 2c$. In summary, the Bloch vectors corresponding to the minimizing states are:
\begin{equation}
\vec{r}_{\pm} = \left\{ \begin{array}{ll} \frac{\vec{b} + \vec{a}}{2c} & {\rm for} \; 0 \leq \theta \leq \frac{\pi}{2} \\
 \frac{\vec{b} - \vec{a}}{2c} & {\rm for} \; \frac{\pi}{2} \leq \theta \leq \pi
 \end{array} \right\} \nonumber
\end{equation}
\end{proof}

\section{Proof of Theorem~\ref{thm:anticommuting}}\label{sec:proof}
Here we seek to prove the disturbance tradeoff relation in Eq.~\eqref{eq:anticommuting_dist}for a set of pairwise anticommuting observables $\{A_{1}, A_{2}, \ldots, A_{N}\}$ with eigenvalues $\{\pm 1\}$:
\begin{equation}
 \{A_{j}, A_{k}\} = 0\; \forall j\neq k, \; (A_{i})^{2} = \Id \; \forall i=1,\ldots,N . \nonumber
\end{equation}
\begin{proof}
For any pure state $|\psi\rangle$, the expectation values of such a set of $N$ dichotomic anticommuting observables are known to satisfy the following relation~\cite{ww:cliffordUR, NKG12_stabilizerEUR}:
\begin{equation}
 \sum_{i=1}^{N}(\langle \psi |A_{i}|\psi\rangle)^{2} \leq 1 . \label{eq:metaUR}
\end{equation}
Corresponding to a measurement of observable $A_{i}$ in state $|\psi\rangle$, the probabilities of obtaining outcomes $\pm 1$ are related to the expectation value $\langle \psi |A_{i}|\psi\rangle$, as follows:
\begin{equation}
p_{|\psi\rangle}^{A_{i}}(\pm 1) = \langle \psi|P_{i}^{\pm}|\psi\rangle = \frac{1 \pm \langle \psi|A_{i}|\psi\rangle}{2} .
\end{equation}
Thus the relation Eq.~\eqref{eq:metaUR} above implies the following upper bound on the sums of the probabilities:
\begin{eqnarray}
\sum_{j=1}^{N}\left[(p^{A_{i}}_{|\psi\rangle}(+))^{2} + (p^{A_{i}}_{|\psi\rangle}(-))^{2}\right] &=& \sum_{j=1}^{N}\frac{1 + (\langle\psi|A_{i}|\psi\rangle)^{2}}{2} \nonumber \\
&\leq& \frac{N + 1}{2} .
\end{eqnarray}
The disturbance tradeoff relation \eqref{eq:anticommuting_dist} follows immediately.
\end{proof}

\section{Example illustrating $ d_{\alpha}(\cA,\cB) = 0 \nRightarrow c_{S}(\cA,\cB) = 0$ }\label{sec:dist_uncPOVM}
Let $|\phi_{1}\rangle, |\phi_{2}\rangle, |\phi_{3}\rangle \in \cH_{3}$ be an orthonormal basis for a three-dimensional Hilbert space $\cH_{3}$. Consider the POVM $\cA$ described by the following positive operators:
\begin{eqnarray}
A_{1} &=& \frac{1}{6}|\phi_{1}\rangle\langle\phi_{1}| + \frac{1}{2}(|\phi_{2}\rangle\langle\phi_{2}| + |\phi_{3}\rangle\langle\phi_{3}|), \nonumber \\
A_{2} &=& \frac{2}{3}|\phi_{1}\rangle\langle\phi_{1}| + \frac{1}{2}(|\phi_{2}\rangle\langle\phi_{2}| + |\phi_{3}\rangle\langle\phi_{3}|), \nonumber \\
A_{3} &=& \frac{1}{6}|\phi_{1}\rangle\langle\phi_{1}| . \nonumber
\end{eqnarray}

Consider a different POVM $\cB$ which is constructed with vectors $\{|\phi_{1}\rangle, |\phi'_{2}\rangle, |\phi'_{3}\rangle\}$, which form another orthonormal basis for $\cH_{3}$.
\begin{eqnarray}
B_{1} &=& \frac{1}{6}|\phi_{1}\rangle\langle\phi_{1}| + \frac{1}{2}(|\phi'_{2}\rangle\langle\phi'_{2}| + |\phi'_{3}\rangle\langle\phi'_{3}|), \nonumber \\
B_{2} &=& \frac{2}{3}|\phi_{1}\rangle\langle\phi_{1}| + \frac{1}{2}(|\phi'_{2}\rangle\langle\phi'_{2}| + |\phi'_{3}\rangle\langle\phi'_{3}|), \nonumber \\
B_{3} &=& \frac{1}{6}|\phi_{1}\rangle\langle\phi_{1}| . \nonumber
\end{eqnarray}
The state $|\phi_{1}\rangle$ is a common eigenvector of all the operators $\{A_{i}, B_{j},\; i,j=1,2,3\}$. Hence measurements of $\cA, \cB$ via L\"uders instruments lead to a zero disturbance tradeoff in state $|\phi_{1}\rangle$. However, since none of the POV elements $\{A_{i}, B_{j},\; i,j=1,2,3\}$ has eigenvalue $1$, there is no state with a zero uncertainty tradeoff. $\cA,\cB$ is thus an example of a pair of POVMs which have a zero disturbance tradeoff, but at the same time, a non-zero uncertainty tradeoff. Hence, $d_{\alpha}(\cA,\cB) = 0 \nRightarrow c_{S}(\cA,\cB) = 0$.


\end{document}